\documentclass[runningheads]{llncs}
\usepackage[T1]{fontenc}
\usepackage{graphicx}
\usepackage{dcolumn}
\usepackage{bm}
\usepackage{multirow}
\usepackage{physics}
\usepackage{hyperref}
\usepackage{xcolor}
\usepackage[normalem]{ulem}
\usepackage[linesnumbered,ruled,vlined]{algorithm2e}
\usepackage[caption=false]{subfig}
\usepackage{units}
\usepackage{amsfonts}
\usepackage{cite}
\usepackage{nopageno}

\begin{document}
\title{Tucker iterative quantum state preparation}
%
%
\author{Carsten Blank\inst{1}\orcidID{0000-0003-3450-0823} \and
Israel F. Araujo\inst{1}\orcidID{0000-0002-0308-8701}}
\authorrunning{Blank et al.}
%
\institute{data cybernetics ssc GmbH, Landsberg am Lech 86899, Germany
\email{blank@data-cybernetics.com}\\
\url{https://data-cybernetics.com}}
\maketitle              
\begin{abstract}
Quantum state preparation is a fundamental component of quantum algorithms, particularly in quantum machine learning and data processing, where classical data must be encoded efficiently into quantum states. Existing amplitude encoding techniques often rely on recursive bipartitions or tensor decompositions, which either lead to deep circuits or lack practical guidance for circuit construction. In this work, we introduce Tucker Iterative Quantum State Preparation (Q-Tucker), a novel method that adaptively constructs shallow, deterministic quantum circuits by exploiting the global entanglement structure of target states. Building upon the Tucker decomposition, our method factors the target quantum state into a core tensor and mode-specific operators, enabling direct decompositions across multiple subsystems.

\keywords{Quantum state preparation  \and Entanglement \and Circuit decomposition \and Tensor decomposition \and Tucker decomposition.}
\end{abstract}
\section{Introduction}

Quantum computing has emerged as a rapidly advancing field with significant recent progress in both theoretical and experimental domains. Current quantum hardware resides in the so-called Noisy Intermediate-Scale Quantum (NISQ) era, characterized by devices that operate with a limited number of qubits and are susceptible to noise and decoherence. These limitations pose substantial challenges in the engineering, control, and deployment of quantum systems. In particular, NISQ devices lack full fault tolerance and require sophisticated quantum error mitigation or correction strategies. Additional complications arise in the design and implementation of quantum circuits, especially concerning the reliable execution of quantum gates. A critical bottleneck in practical quantum computation is the task of encoding classical data into quantum states -- a process referred to as \textit{quantum state preparation}, or more generally, data encoding and loading.

Quantum state preparation is the process of encoding classical data into quantum states, which serve as inputs for quantum algorithms in applications such as quantum machine learning and data processing tasks~\cite{blank2020quantum,schuld2018supervised,park2019circuit}. Multiple approaches have been developed to address the quantum state preparation problem. Among the most widely used are \textit{Amplitude Encoding}~\cite{bergholm2005quantum,araujo2021divide,araujo2021configurable,araujo2024lowrank}, \textit{Qubit Encoding} (also referred to as \textit{Angle Encoding})~\cite{grant2018hierarchical}, and \textit{Hamiltonian Encoding}~\cite{cade2020strategies,schuld2018supervised}.

In this work, we introduce a novel amplitude encoding method named \textit{Tucker Iterative Quantum State Preparation} (Q-Tucker). This technique aims to provide a deterministic approach to state preparation by generating quantum circuits whose depth is adaptively determined by the entanglement structure of the target quantum state~\cite{araujo2024lowrank}.

Entanglement has long been recognized as a fundamental resource in quantum computing, enabling capabilities that are unattainable in classical computation. Nevertheless, quantum features such as entanglement and superposition can, to a certain extent, be simulated on classical hardware. In practice, the preparation of an $n$-qubit quantum state -- represented by a complex vector of $N = 2^n$ amplitudes -- for execution on current NISQ-era quantum processing units (QPUs) often depends on classical pre-processing.

Delegating these preparatory computations to classical resources becomes increasingly challenging as the dimensionality of the quantum state grows, a scenario commonly encountered in quantum machine learning and quantum data processing tasks. A well-established strategy for mitigating this complexity involves decomposing the target state into approximately low-entangled subsystems. Even approximate decompositions can offer substantial advantages, such as reduced quantum circuit depth and, in many cases, improved fidelity when compared to exact state initialization. As a result, efficiently characterizing and exploiting the entanglement structure of quantum states is a critical objective in the design of scalable quantum algorithms and state preparation protocols.

\subsection{Related work}

Several previous studies have explored low-rank quantum state preparation by leveraging hierarchical decompositions of multipartite quantum systems. Araujo et al. introduced an iterative scheme based on successive applications of the Schmidt decomposition to identify and disentangle bipartitions within a quantum system~\cite{araujo2024lowrank}. In their method, at each iteration, a suitable bipartition is selected, and the entanglement is reduced until no further disentanglement is possible, yielding a compact, low-rank representation of the original quantum state.

Regarding quantum state encoding, a variety of approaches have been proposed to map the $N$ complex-valued amplitudes of an arbitrary quantum state onto a physical quantum circuit. One of the earliest deterministic techniques was presented by Grover et al., who proposed a method to construct quantum superpositions corresponding to efficiently integrable probability distributions~\cite{grover2002superposition}. While conceptually straightforward, this method requires $\mathcal{O}(N)$ quantum operations, rendering it impractical for large-scale quantum systems.

An alternative family of methods exploits tensor network structures, particularly the matrix product state (MPS) representation, to facilitate efficient state encoding. Schön et al. proposed a sequential scheme to construct MPS representations via one- and two-qubit quantum gates, enabling scalable circuit generation for a broad class of quantum states~\cite{schon2006sequential}. Owing to their strong approximation capabilities, MPS methods have become widely adopted in quantum many-body physics. Building upon this foundation, subsequent works have integrated parameterized tensor-network models to reduce quantum circuit depth and to support hybrid quantum-classical training schemes~\cite{ran2020mps}. In some instances, iterative application of so-called \textit{disentangler} operators to MPS representations has been proposed to further improve state compression and circuit efficiency.

A conceptually distinct approach based on the Tucker decomposition has been proposed by Protasov et al.~\cite{protasov2021tucker}. In contrast to methods relying on the Schmidt decomposition or singular value decomposition (SVD), the Tucker decomposition generalizes these techniques by enabling one-versus-all decompositions across multiple modes.
While the authors advocate for the use of the Tucker decomposition in quantum state preparation~\cite{protasov2021tucker}, their approach does not specify a selection criterion for determining the appropriate decomposition among the many non-unique possibilities.
Finally, the work provides no concrete procedure for mapping the decomposed representation into a quantum circuit, leaving a critical gap between theoretical formulation and practical implementation.

\subsection{Objectives and organization}

The primary objective of this work is to address the limitations of existing state preparation techniques by introducing a method based on tensor decomposition that is naturally suited for multipartite quantum systems. Unlike prior approaches that rely on recursive bipartitions -- often resulting in complex hierarchical structures -- our method circumvents these constraints by directly exploiting the global structure of the quantum state.

The remainder of the paper is organized as follows.
Section~\ref{sec:qtucker} introduces the Tucker Iterative Process, detailing its decomposition strategy, fidelity control mechanisms, and the circuit synthesis procedure. Section~\ref{sec:convergence} presents an analysis of the method’s convergence. Section~\ref{sec:results} reports the experimental results. Finally, Section~\ref{sec:conclusion} concludes the paper with a summary of the findings and future research directions.

\section{Tucker iterative quantum state preparation}
\label{sec:qtucker}

This section introduces a hardware-efficient method for encoding classical data into a quantum device using an iterative tensor decomposition strategy. Using a multipartite-specific decomposition based on the Tucker tensor representation, the method addresses the limitations of previous methods~\cite{araujo2024lowrank} and eliminates the need for complex hierarchical bipartitions. The quantum circuit is constructed from the Tucker representation, with a performance metric guiding the iterative refinement process (see Fig.~\ref{fig:qtucker})

Any vector \( v \in \mathbb{C}^N \) can be represented as a \(m\)-tensor \( T \in \mathbb{C}^{d_1} \otimes \ldots \otimes \mathbb{C}^{d_m} \), for an arbitrary \( m > 1 \), with the constraint \( \prod_i d_i = N \). Given such a tensor \( T \), a representation or decomposition can be determined. One such representation is the Tucker decomposition:  
\[ T = G \times_1 W^{(1)} \times_2 W^{(2)} \times_3 \ldots \times_m W^{(m)}, \]  
where \( \times_i \) denotes the mode-\(i\) product, \( G \in \mathbb{C}^{p_1} \otimes \ldots \otimes \mathbb{C}^{p_m} \) is the core tensor (with \( p_i \le d_i \)), and the factor matrices \( W^{(i)} \in \mathbb{C}^{d_i \times p_i} \).
This decomposition is naturally suited for quantum computing since the tensor decomposition of \( T \) directly corresponds to a quantum state decomposition:  
\[ |\psi\rangle = (W^{(1)} \otimes \ldots \otimes W^{(m)})|G\rangle, \]  
where we denote the operators \( W^{(i)} \) using the same symbol as the factor matrices, even if their dimensions do not match (\( p_i \neq d_i \)), in which case columns are padded with basis vectors. The core tensor \( G \) is interpreted as a quantum state \( |G\rangle \) via vectorization.

A key feature is that the factor matrices can be interpreted as unitary operators acting on separate subspaces (hence, the Tucker format is also known as the \textit{subspace format}). From a quantum computing perspective, these unitary operations can be executed in parallel -- if hardware permits -- making the depth of any resulting quantum circuit determined by the longest individual factor circuit \( W^{(i)} \). The geometric entanglement of the core tensor is never greater than that of the original state and can, in many cases, be significantly lower. This property enables a recursive or iterative approach that terminates. If recursion halts after \( r \) steps, the resulting decomposition is:  
\[ |\psi\rangle = W_1 \ldots W_r |0\rangle. \]

\begin{figure*}[t]
    \centering
    \includegraphics[width=1.0\linewidth]{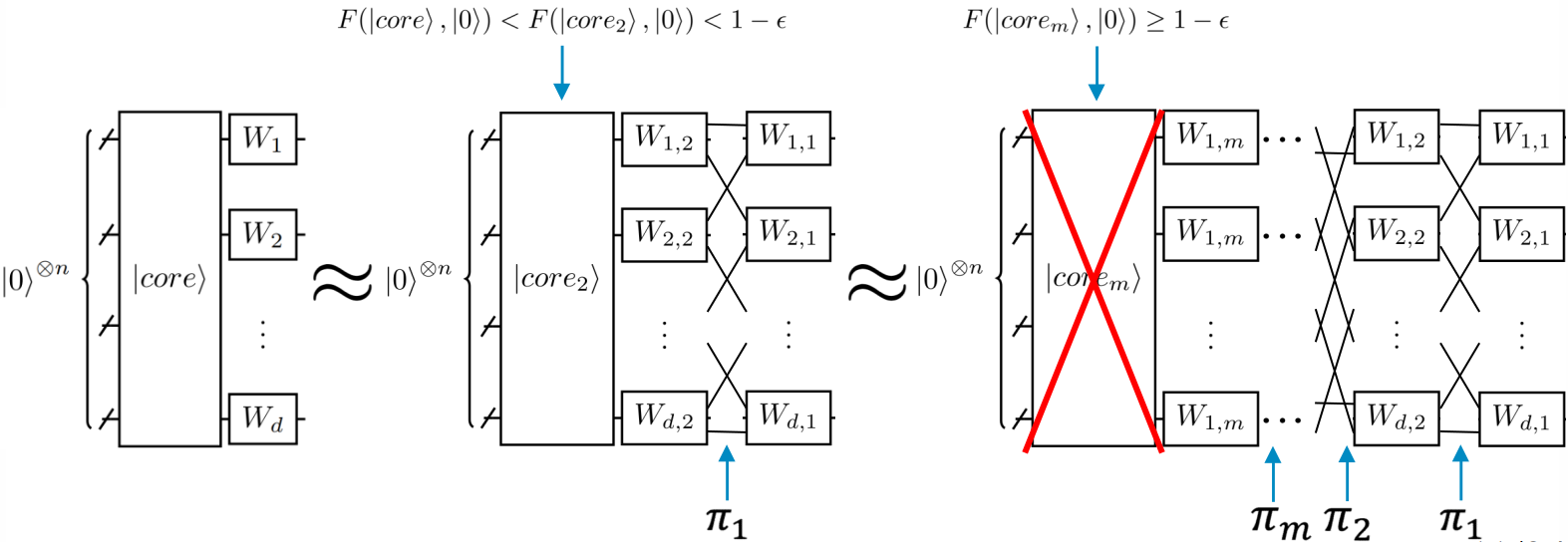}
    \caption{Quantum circuit representation of the Tucker Iterative Process for circuit compression. The initial quantum state is decomposed into a core vector \(|\text{core}\rangle\) and local hardware-efficient unitary (or isometry) factors \(W_i\). At each iteration, the fidelity \(F(|\text{core}_j\rangle, |0\rangle)\) is evaluated to determine whether the core can be approximated by the ground state. If the fidelity exceeds the threshold (\(F \geq 1 - \epsilon\)), the core is removed (indicated by the red cross). 
    Between iterations, logical qubit swaps \(\pi_j\) are applied to optimize the placement of the factors. These swaps are guided by minimizing the bond dimension between qubit partition blocks -- each corresponding to a factor \(W_i\) -- which reflects the entanglement between subsystems.}
    \label{fig:qtucker}
\end{figure*}

Each iteration \( j \) of the process yields a factorization \( W_j = W_{(j,1)} \otimes \ldots \otimes W_{(j,m)} \). If the resulting core tensor -- or quantum state \( |G_r\rangle \) -- can be expressed as a product state (i.e., composed solely of operations that can be absorbed into the adjacent factors), or if the fidelity between the core state and the true ground state exceeds a predefined threshold, then the core state is adopted as the ground state. In this case, it requires no additional quantum operations. Each factor \( W_{(j,i)} \) is implemented as a unitary or isometry operator \( W^{(i)} \in \mathbb{C}^{d_i \times p_i} \). If a full unitary operator is not required (\( p_i < d_i \)), then isometries can be used -- an important detail that enables circuit simplification and motivates the method described below.

It is well known that the tensorization of a vector \( v \in \mathbb{C} \) into a tensor \( T \in \mathbb{C}^{d_1} \otimes \ldots \otimes \mathbb{C}^{d_m} \) is not unique. Prior work has shown that different sequences of logical swap operations can lead to different entanglement behaviors~\cite{araujo2024lowrank}. This result extends to tensorization itself, resulting in a search space defined by the number of modes \(m\), the dimension tuple \(\mathbf{d} = (d_{i_1}, \ldots, d_{i_m})\), and a permutation \(\pi \in \text{Sym}([m])\) that encodes the swap pattern. Each \textit{tensorization configuration} is thus characterized by a tuple \((m, \mathbf{d}, \pi)\). Since the optimal configuration \((m, \mathbf{d}, \pi)\) is not unique, it is possible to seek those that balance approximation quality with implementation complexity.

In general, there exists a configuration \((m, \mathbf{d}, \pi)\) such that the entanglement in the core tensor is smaller than in the original tensor \( T \). This allows the method to be applied recursively until a product state is reached, which can be trivially initialized. Alternatively, methods such as that of Rudolph \& Grover~\cite{grover2002superposition}, or their concrete implementations~\cite{bergholm2005quantum,mottonen2004quantum}, can be employed.

The search can be simplified by taking into account the coupling map limitations of the target device. On currently available quantum hardware, efficient two-qubit operations are typically restricted to neighboring qubits. Under this constraint, we can fix \( m = n/2 \), leading to \( d_i = 4 \). Since the dimension of these subsystems is fixed, the cost of each SVD is constant.
This eliminates the scalability bottleneck often associated with tensor decompositions such as Tucker, where SVD costs scale cubically with the input dimension.


Even with such simplifications, the search process still depends on logical qubit swaps, which are costly. Identifying the partition with the minimal bond dimension -- that is, the one producing the core vector with the least entanglement -- requires a combinatorial search. The complexity of this search is determined by the size of the partition blocks, \( n_i=\log_2 d_i \), and grows according to the binomial coefficient \( \binom{n}{n_i} \). In the worst-case scenario, where all blocks have equal size, the complexity becomes \( \mathcal{O}(2^n) \). For large systems, the computational overhead of evaluating the bond dimension at each step becomes prohibitive, making entanglement quantification infeasible.

To address this, heuristic strategies and physically motivated constraints are essential to guide the search process. These include limiting the partition search space to locality-preserving cuts, prioritizing low-overlap subsystems, or leveraging prior knowledge about the structure of the target state (e.g., symmetries or sparsity). Such strategies drastically reduce the search space and enable approximate yet effective decomposition, balancing computational tractability with circuit efficiency.

\section{Convergence guarantee of Q-Tucker}
\label{sec:convergence}

Here we formalize the correctness of the iterative procedure described in Section~\ref{sec:qtucker}. Throughout, an \emph{oracle} provides a partition $\mathcal P=\{B_1,\dots,B_m\}$ (a qubit set per block) that maximizes ``inner'' entanglement within blocks (equivalently, minimizes inter-block entanglement or bond
dimension). Given this $\mathcal P$, we run a Tucker step and then \emph{fix the gauge} (``monotone gauge'') by choosing the first column of each block operator to maximize the block-product overlap with the current state. Tucker/HOSVD and the non-uniqueness of factors up to rotations inside singular subspaces are classical facts~\cite{deLathauwer2000HOSVD,kolda2009tensor}.

\paragraph{Notation and per-step objective.}
Let $\lvert G_{j-1}\rangle$ be the core state before iteration $j$ and $\lvert 0\rangle=\bigotimes_i\lvert 0\rangle_{B_i}$ the block-wise ground state. For a partition $\mathcal P$, define the multipartite product-overlap
\[
\alpha_{\mathcal P}(\lvert\phi\rangle)\;:=\;\max_{\|u_i\|=1}\bigl|\langle u_1\!\otimes\!\cdots\!\otimes\!u_m \mid \phi\rangle\bigr|,
\]
which is called the \emph{entanglement eigenvalue}, and corresponds to the closest separable state~\cite{wei2003geometric}.
The \emph{monotone gauge} sets the first column of each block unitary so that $W^{(i)}\lvert 0\rangle_{B_i}=u_i^\star$ where $(u_i^\star)$ attains the maximum above.

\begin{lemma}[monotone-gauge identity]
For the (oracle-provided) partition $\mathcal P_j$ used at iteration $j$,
\[
\langle 0\mid G_j\rangle \;=\; \alpha_{\mathcal P_j}(\lvert G_{j-1}\rangle),
\qquad
F_j \;=\; \alpha_{\mathcal P_j}(\lvert G_{j-1}\rangle)^2.
\]
\end{lemma}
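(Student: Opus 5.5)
The plan is to make the relation between consecutive cores explicit and then read off the overlap with the ground state. At iteration $j$ the Tucker step with partition $\mathcal P_j$ produces block operators $W^{(1)},\dots,W^{(m)}$, completed to unitaries on the blocks $B_i$ by padding with basis vectors as in Section~\ref{sec:qtucker}. The new core is defined by
\[
\lvert G_{j-1}\rangle = W_j\lvert G_j\rangle,
\qquad W_j=W^{(1)}\otimes\cdots\otimes W^{(m)} .
\]
Since $W_j$ is unitary, $\lvert G_j\rangle = W_j^\dagger\lvert G_{j-1}\rangle$. Hence
\[
\langle 0\mid G_j\rangle=\langle 0\mid W_j^\dagger\mid G_{j-1}\rangle=\bigl(W_j\lvert 0\rangle\bigr)^\dagger\lvert G_{j-1}\rangle .
\]
Because $W_j$ is a tensor product and $\lvert 0\rangle=\bigotimes_i\lvert 0\rangle_{B_i}$, we get $W_j\lvert 0\rangle=\bigotimes_i W^{(i)}\lvert 0\rangle_{B_i}$. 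By the monotone gauge this equals $u_1^\star\otimes\cdots\otimes u_m^\star$.

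Therefore $\langle 0\mid G_j\rangle=\langle u_1^\star\otimes\cdots\otimes u_m^\star\mid G_{j-1}\rangle$. Its modulus equals $\alpha_{\mathcal P_j}(\lvert G_{j-1}\rangle)$ because the $u_i^\star$ attain the maximum in the definition of the entanglement eigenvalue. To remove the phase, I will multiply one optimal vector, say $u_1^\star$, by a unit complex scalar. This leaves the product overlap's modulus unchanged and keeps the tuple optimal, and it turns the inner product into a nonnegative real number. This phase convention is part of fixing the gauge: only the first column of $W^{(1)}$ changes by a phase, and unitarity is preserved. With it, $\langle 0\mid G_j\rangle=\alpha_{\mathcal P_j}(\lvert G_{j-1}\rangle)$ holds as an equality of real numbers. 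Since $F_j$ is the fidelity $\lvert\langle 0\mid G_j\rangle\rvert^2$ between the pure core state and the ground state, the second identity follows immediately.

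Two points need care. First, a maximizer must exist: the product of unit spheres is compact and the overlap is continuous, so the maximum is attained. Second, the gauge choice must be compatible with the Tucker step, i.e.\ one may prescribe the first column of each block unitary to be $u_i^\star$. I expect this second point to be the main subtlety. In a plain HOSVD the columns of $W^{(i)}$ are leading singular vectors of the mode-$i$ unfolding, and $u_i^\star$ need not lie in that span. I will argue that the statement only requires $W_j$ to be a block-local unitary and $\lvert G_j\rangle$ the corresponding core. Any unitary on $B_i$ with first column $u_i^\star$ exists, obtained by extending $u_i^\star$ to an orthonormal basis via Gram--Schmidt. When truncation to isometries with $p_i<d_i$ is used, I will require that $u_i^\star$ lies in the retained column space, or else work with the full unitaries so that the identity $\lvert G_j\rangle=W_j^\dagger\lvert G_{j-1}\rangle$ is exact. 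This choice is exactly what the monotone gauge stipulates, and the computation above then goes through.
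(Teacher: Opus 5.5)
Your proposal is correct and is the paper's argument. From $\lvert G_j\rangle=W_j^\dagger\lvert G_{j-1}\rangle$ and $W_j\lvert 0\rangle=\bigotimes_i u_i^\star$ you get $\langle 0\mid G_j\rangle=\langle \otimes_i u_i^\star\mid G_{j-1}\rangle$; your phase convention, compactness argument, and gauge-compatibility discussion make explicit what the paper's one-line proof leaves implicit, and your compatibility worry is milder than you suggest, since at a maximizer $u_i^\star$ is proportional to the contraction of $\lvert G_{j-1}\rangle$ with the other $u_k^\star$, so it always lies in the range of the mode-$i$ unfolding and only a lossy truncation of that range can clash with the gauge.
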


\begin{proof}
By construction, $(\otimes_i W_j^{(i)\dagger})\lvert G_{j-1}\rangle=\lvert G_j\rangle$ and $W_j^{(i)}\lvert 0\rangle_{B_i}=u_i^\star$; hence
$\langle 0\mid G_j\rangle=\langle \otimes_i u_i^\star \mid G_{j-1}\rangle$, which is the maximizing value.
\end{proof}

\begin{corollary}[per-step monotonicity for any partition]
For any partition $\mathcal P$,
\[
\alpha_{\mathcal P}(\lvert G_{j-1}\rangle)\;\ge\; \bigl|\langle 0\mid G_{j-1}\rangle\bigr|,
\]
since $\lvert 0\rangle$ is a feasible block-product vector. Therefore, with the monotone gauge on $\mathcal P_j$,
\[
F_j \;=\; \alpha_{\mathcal P_j}(\lvert G_{j-1}\rangle)^2 \;\ge\; \bigl|\langle 0\mid G_{j-1}\rangle\bigr|^2 \;=\; F_{j-1}.
\]
This mirrors the monotonicity of ALS/HOOI-style block
updates for Tucker models~\cite{deLathauwer2000BestRank1,kolda2009tensor,tseng2001convergence}.
\end{corollary}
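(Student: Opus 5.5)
The plan is to derive the corollary in two moves. First comes the feasibility inequality, valid for an arbitrary partition $\mathcal P$. Then it is specialized to the oracle partition $\mathcal P_j$ and combined with the monotone-gauge identity of the preceding Lemma.

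For the first inequality, fix any partition $\mathcal P=\{B_1,\dots,B_m\}$. The block-wise ground state $\lvert 0\rangle=\bigotimes_i\lvert 0\rangle_{B_i}$ is a product of unit vectors, one per block. It is therefore an admissible choice $u_i=\lvert 0\rangle_{B_i}$ in the maximization defining $\alpha_{\mathcal P}$.

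A maximum is at least the value at any feasible point, so $\alpha_{\mathcal P}(\lvert G_{j-1}\rangle)\ge|\langle 0\mid G_{j-1}\rangle|$. The maximum exists because the feasible set, a product of unit spheres, is compact and the objective is continuous. One subtlety should be flagged: $\lvert 0\rangle$ must be the all-zero computational state on the full register. It must not be a block-dependent object, so that it is simultaneously block-product for every $\mathcal P$. Under any regrouping of qubits into blocks, the all-zero state factorizes, so this holds.

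For the monotonicity chain, I specialize to $\mathcal P=\mathcal P_j$. The Lemma gives $\langle 0\mid G_j\rangle=\alpha_{\mathcal P_j}(\lvert G_{j-1}\rangle)$, which is real and nonnegative. Squaring it gives $F_j=\alpha_{\mathcal P_j}(\lvert G_{j-1}\rangle)^2$.

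Both sides of the first inequality are nonnegative, so squaring preserves it and yields $F_j\ge|\langle 0\mid G_{j-1}\rangle|^2$. It remains to identify the right-hand side with $F_{j-1}$, using the definition $F_{j-1}=|\langle 0\mid G_{j-1}\rangle|^2$, the fidelity of the current core with the ground state. For $j\ge 2$ this is consistent with the Lemma applied at step $j-1$. For $j=1$ we take $G_0=\lvert\psi\rangle$ and define $F_0$ directly.

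The main obstacle is this identification of $F_{j-1}$. The Lemma expresses $F_{j-1}$ as $\alpha_{\mathcal P_{j-1}}(\lvert G_{j-2}\rangle)^2$, which is defined relative to the previous partition. One must check that this equals $|\langle 0\mid G_{j-1}\rangle|^2$, computed on the full register regardless of how qubits are grouped.

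This requires that the logical swaps $\pi_j$ between iterations are only relabelings. Concretely, a relabeling maps $\lvert 0\rangle$ to itself, so the overlap with the all-zero state is permutation invariant. I would state this explicitly: $\pi_j\lvert 0\rangle=\lvert 0\rangle$, hence $|\langle 0\mid \pi_j G\rangle|=|\langle 0\mid G\rangle|$.

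With that, the chain $F_j\ge F_{j-1}$ closes. It follows that $(F_j)$ is a nondecreasing sequence bounded by $1$.
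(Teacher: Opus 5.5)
Your proof is correct and matches the paper's argument: $\lvert 0\rangle$ is a feasible point of the maximization defining $\alpha_{\mathcal P}$, and the monotone-gauge Lemma identifies $\alpha_{\mathcal P_j}(\lvert G_{j-1}\rangle)^2$ with $F_j$, while $F_{j-1}=\lvert\langle 0\mid G_{j-1}\rangle\rvert^2$ is simply the fidelity of the current core with the ground state. Your extra remarks (existence of the maximum, squaring nonnegative quantities, partition-independence of the all-zero state, invariance under logical relabelings, and the base case $G_0=\lvert\psi\rangle$) are harmless elaborations; the identification you flag as the main obstacle follows immediately from the Lemma's first identity applied at step $j-1$.
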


\paragraph{Role of the oracle.}
The oracle’s partition, which maximizes inner entanglement, is used to \emph{accelerate} progress: it typically enlarges the cut-based ceiling for block-product overlaps (see below). However, monotonicity $F_j\ge F_{j-1}$ holds \emph{for any partition} when the monotone gauge is applied.

\paragraph{Cut ceiling and strict improvement.}
For a partition $\mathcal P$, let $\beta(\mathcal P):=\min_{C}\lambda_{\max}(C)$ be the minimum, over block cuts $C$, of the largest Schmidt coefficient of $\lvert G_{j-1}\rangle$ across $C$. For any bipartition, the largest Schmidt coefficient equals the maximal overlap with a product state across that cut~\cite{Nielsen_Chuang_2010,plenio2007entanglement}; therefore any block-product is bounded by
\[
F_j \;=\; \alpha_{\mathcal P_j}(\lvert G_{j-1}\rangle)^2 \;\le\; \beta(\mathcal P_j).
\]
Partitions maximizing inner entanglement tend to increase $\beta(\mathcal P_j)$, raising a tight upper bound on $F_j$. A \emph{strict} increase $F_j>F_{j-1}$ is guaranteed whenever the ``worst'' cut for $\mathcal P_j$ is unique and its top Schmidt vectors factor across blocks, in which case $\alpha_{\mathcal P_j}^2=\beta(\mathcal P_j)$ and the larger ceiling is attained.

\paragraph{Stalls and block-size expansion.}
Fix a maximum block size $k$ (so each block unitary acts on $\le k$ qubits). If $F_j=F_{j-1}$ persists, we say the process \emph{stalls at level $k$}. This occurs because, with fixed $k$, each step (via the monotone gauge) already attains the \emph{best possible} block-product overlap within the current model class, i.e., $F_j=\alpha_{\mathcal P_j}(\lvert G_{j-1}\rangle)^2$. When the state’s overlap is capped by the \emph{cut ceiling} for the chosen partition ($\alpha_{\mathcal P_j}=\beta(\mathcal P_j)$), or when all admissible partitions yield the same ceiling (or the worst-cut top Schmidt vectors fail to factor across blocks), no strict improvement is possible. Increasing $k$ strictly enlarges the feasible set and (weakly) raises the ceilings $\beta$, so the limiting fidelity cannot decrease; allowing $k$ to reach $n$ (a single block) makes $F=1$ attainable.

\begin{theorem}[global convergence, stall-and-grow]
\label{thm:global-convergence}
At each iteration $j$, (i) receive only the partition $\mathcal P_j$ from the oracle; (ii) run a Tucker step on $\mathcal P_j$ and apply the monotone gauge to form $\lvert G_j\rangle$; (iii) if stalled at block size $k$, increase $k$. Then $(F_j)$ is non-decreasing and converges for fixed $k$. Moreover, under stall-and-grow with $k$ allowed to reach $n$ (one block), the process attains $F=1$ (exact preparation) in finitely many expansions.
\end{theorem}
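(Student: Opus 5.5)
The plan has three parts: monotonicity at fixed $k$, convergence at fixed $k$, and termination of the stall-and-grow loop.

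\textbf{Monotonicity at fixed $k$.} This follows directly from the monotone-gauge identity and the per-step monotonicity corollary. At iteration $j$, the lemma gives $F_j=\alpha_{\mathcal P_j}(\lvert G_{j-1}\rangle)^2$. The corollary then gives $F_j\ge |\langle 0\mid G_{j-1}\rangle|^2=F_{j-1}$ for whatever partition the oracle returns. No property of the oracle is used here, only that $\lvert 0\rangle$ is a feasible block-product vector.

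\textbf{Convergence at fixed $k$.} Since every $\lvert G_j\rangle$ is a unit vector (it is obtained from a unit vector by a unitary), we have $0\le F_j\le 1$. The sequence $(F_j)$ is therefore non-decreasing and bounded, and the monotone convergence theorem gives a limit $F^{(k)}_\infty\le 1$.

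\textbf{Monotonicity across growth steps.} Here I need to check that enlarging $k$ does not reset or decrease the fidelity. When $k$ is increased, the core state is unchanged. Any partition with blocks of size $\le k$ is refined by some partition with blocks of size $\le k'$ for $k'>k$. Block-product vectors for the finer partition are also block-product vectors for the coarser one, so $\alpha$ can only increase. In any case the corollary still gives $F_{j+1}\ge F_j$ across the growth step. The whole sequence is therefore non-decreasing, and each level limit satisfies $F^{(k)}_\infty\le F^{(k')}_\infty$.

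\textbf{Finiteness of the expansion.} The block size $k$ takes values in $\{1,\dots,n\}$ and grows only on a stall, so at most $n-1$ expansions occur. This is where I expect the main obstacle. The notion of ``stall'' must be made operational so that each level is left after finitely many iterations. A sequence that increases strictly forever while converging to some $F^{(k)}_\infty<1$ is never literally stalled, yet it never reaches the next level either.

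I would handle this by definition. Declare a stall when $F_j-F_{j-1}\le\delta$ for a tolerance $\delta\ge0$; with $\delta>0$ a bounded monotone sequence must trigger this within $\lceil 1/\delta\rceil$ steps. Alternatively, read the theorem as asserting convergence within each level and then passing to the next level in the limit. With $\delta=0$ I would instead argue that if the state is not stalled, a strict increase occurs, and otherwise growth is triggered.

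\textbf{Exactness at $k=n$.} Once $k=n$ is reached, the only partition is the single block $\{[n]\}$. There $\alpha_{\{[n]\}}(\lvert G\rangle)=\max_{\|u\|=1}|\langle u\mid G\rangle|=1$, attained at $u=\lvert G\rangle$. The monotone gauge chooses $W\lvert0\rangle=\lvert G_{j-1}\rangle$, so by the lemma $F_j=1$ after one more step. Equivalently, the core becomes exactly $\lvert 0\rangle$, and the product $W_1\cdots W_r\lvert0\rangle$ reproduces $\lvert\psi\rangle$ exactly. Combining this with the bound of at most $n-1$ expansions gives exact preparation after finitely many expansions, which completes the argument.
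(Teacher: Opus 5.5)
Your proposal is correct and follows the paper's proof. Both use monotonicity via the monotone-gauge identity and its corollary (valid for any partition, hence also across growth steps), convergence of a bounded non-decreasing sequence at fixed $k$, and $F=1$ once a single block acts on all $n$ qubits, where the gauge sets $W\lvert 0\rangle=\lvert G_{j-1}\rangle$. Your remark that ``stall'' needs an operational definition is a genuine sharpening that the paper's one-line proof leaves implicit: under the literal criterion $F_j=F_{j-1}$ a strictly increasing sequence could remain at level $k$ forever, whereas a tolerance $\delta>0$ forces a stall within $\lceil 1/\delta\rceil$ steps and makes the finite-termination claim rigorous.
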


\begin{proof}
Monotonicity and boundedness ($0\le F_j\le 1$) yield convergence for fixed $k$. Enlarging $k$ cannot reduce the achievable limit; with $k=n$ any pure state can be mapped to $\lvert 0\rangle$ by a single block unitary~\cite{Nielsen_Chuang_2010}, whence $F=1$.
\end{proof}

\paragraph{Summary.}
Q-Tucker’s per-iteration fidelity \emph{never decreases} when the monotone gauge is enforced after the oracle’s partition. Oracle partitions that maximize inner entanglement typically improve the \emph{rate} of progress by enlarging the cut ceiling. If the process stalls at a given block size, raising the block size restores progress; allowing it to reach the full register guarantees exact convergence.

\section{Correlation Graph guided partition search}
\label{sec:corrgraph}

The search over tensorization configurations \((m,\mathbf d,\pi)\) (cf.\ Sec.~\ref{sec:qtucker}) can be formulated as a graph-partitioning problem informed by the entanglement structure of the target state. We introduce the \emph{Correlation Graph}, a heuristic construct that approximates the state’s entanglement configuration by mapping qubit correlations onto weighted edges. The resulting cut objective encodes the desired trade-off: maximize intra-block correlation (high ``inner'' entanglement) while minimizing inter-block correlation (low bond dimension between blocks).

\subsection{Graph definition and weights}

Let $\ket{\psi}\in(\mathbb C^2)^{\otimes n}$ be a normalized pure state. Define a weighted, undirected graph
\(
G=(V,E,W),
\)
where $V=\{1,\dots,n\}$ (one node per qubit) and $W=[w_{ij}]$ encodes pairwise correlation strengths. Several definitions of $w_{ij}$ are possible; in practice we employ a two-qubit correlation proxy derived from reduced density matrices,
\begin{equation}
\label{eq:wij}
w_{ij}\;:=\;\underbrace{S(\rho_i)+S(\rho_j)-S(\rho_{ij})}_{\text{mutual information } I(i\!:\!j)}
\quad\text{or}\quad
w_{ij}\;:=\;\|\rho_{ij}-\rho_i\!\otimes\!\rho_j\|_F,
\end{equation}
where $\rho_{ij}=\Tr_{\overline{ij}}(\ket{\psi}\bra{\psi})$ and $\rho_i=\Tr_{\overline{i}}(\ket{\psi}\bra{\psi})$. Both measures vanish for product pairs and increase with correlation strength.\footnote{Any monotone, nonnegative symmetric proxy may be used; we typically select the Frobenius option for numerical robustness and ease of accumulation.}

\paragraph{Fast accumulation.}
Let $N=2^n$. Building all one-qubit marginals $\{\rho_i\}_{i=1}^n$ by reshaping the state tensor once and forming $n$ Gram products costs $O(N\,n)$. Constructing the \emph{fully connected} correlation graph then requires all two-qubit marginals $\rho_{ij}$ for $i<j$, each obtainable as a $(4\times 2^{n-2})$ Gram product after an axis move; this yields $O(N)$ per pair and a total of $O(N\,\binom{n}{2})=O(N\,n^2)$ time, with $O(N)$ extra memory (no $N\times N$ density is ever materialized). In practice, we often evaluate a \emph{sparsified surrogate} consistent with hardware coupling or a small $k$-NN degree cap $d_{\max}$, which reduces the two-qubit work to $O(N\,|E|)=O(N\,n)$ when degrees are bounded. This preserves the paper’s linear-in-\#amplitudes scaling target while remaining faithful to the fully connected definition (the surrogate is a computational shortcut, not a change in the model). We reuse a single tensor view of $\psi$ across all marginals and cache $S(\rho_i)$ for mutual-information weights. For larger blocks ($n_i>2$), exact higher-order marginals (e.g., $\rho_{ijk}$) scale as $O(N\,n^3)$ and beyond, which is why we keep \emph{pairwise} weights for steering and increase $n_i$ only upon stalls (Sec.~\ref{sec:convergence}).

\subsection{Heuristic approximation and motivation}

The Correlation Graph constitutes a heuristic surrogate for the true entanglement layout of $\ket{\psi}$. It approximates a fundamentally multipartite phenomenon through pairwise summaries, which leads to an inevitable approximation gap:

\begin{itemize}
  \item \textbf{Higher-order correlations.} Multipartite entanglement (e.g., GHZ- or W-type) cannot, in general, be reconstructed from pairwise marginals. Pairwise measures may underestimate nonlocal correlations that appear only at the level of three or more qubits.
  \item \textbf{Cut vs.\ edge sum.} The bond dimension relevant to Tucker/HOSVD updates depends on the full Schmidt spectra across block cuts, whereas the graph cut in~\eqref{eq:cut} aggregates pairwise proxies. These quantities correlate well empirically but are not identical.
  \item \textbf{Basis and model dependence.} Working in the computational basis and with fixed block size $n_i$ introduces a coarse-graining that can under- or over-emphasize specific correlations.
  \item \textbf{Sparsification.} For tractability we cap node degree and restrict edges to physically adjacent (or near-adjacent) pairs, omitting weak long-range links that may carry small but nonzero entanglement.
\end{itemize}

Nevertheless, the heuristic is \emph{well aligned} with Q-Tucker’s update mechanism: since each iteration manipulates small local blocks via constant-size SVDs, a pairwise map of where correlations concentrate provides a practical and effective steering signal for forming blocks that reduce inter-block coupling.

An exact optimization of the partition would require evaluating entanglement across all possible cuts and permutations $(m,\mathbf d,\pi)$, an intractable task even for small systems:
\begin{enumerate}
  \item The number of partitions grows super-exponentially (Bell numbers); even restricting to pairwise blocks, the count is $n!/(2^{n/2}(n/2)!)$, prohibitive for $n>10$.
  \item Each evaluation involves computing Schmidt spectra or running full Tucker steps to determine bond dimensions.
  \item Globally minimizing entanglement between all blocks is equivalent to known NP-hard tensor-rank problems.
\end{enumerate}
Hence, the Correlation Graph serves as an efficient, physically motivated heuristic to approximate this otherwise unmanageable optimization.

\subsection{Partition objective and relation to bond dimension}

Given a partition $\mathcal P=\{B_1,\dots,B_m\}$ of $\{1,\dots,n\}$ into blocks of size $n_i=\log_2 d_i$, define the inter-block cut
\begin{equation}
\label{eq:cut}
\Phi(\mathcal P)\;:=\;\sum_{\substack{i\in B_a,\; j\in B_b\\ a\neq b}}\! w_{ij}.
\end{equation}
Minimizing $\Phi(\mathcal P)$ groups strongly correlated qubits within the same block. Empirically, smaller inter-block correlation correlates with lower Schmidt ranks across the corresponding cuts and, consequently, smaller bond dimensions in the Tucker core. In the Q-Tucker iteration, partitions with small $\Phi$ enlarge the per-step ceiling for block-product overlaps (Sec.~\ref{sec:convergence}), leading to faster convergence and shallower circuits.

\subsection{Search routine and variable block sizes}

Although the default implementation fixes $n_i=2$ (two-qubit blocks), the same procedure generalizes naturally to $n_i>2$. Increasing block size allows each SVD to capture more local correlations at the expense of higher classical cost and more expensive unitary synthesis. In practice, $n_i$ is incrementally raised whenever the algorithm \emph{stalls} (cf.\ Theorem~\ref{thm:global-convergence}), providing a natural stall-and-grow mechanism consistent with the convergence proof.

{\small
\begin{algorithm}[H]
\DontPrintSemicolon
\caption{Correlation Graph guided partition ($n_i{=}2$)}
\label{alg:corr-graph}
\KwIn{Statevector amplitudes $\psi$; block size $n_i{=}2$; metric for $w_{ij}$ (Sec.~\ref{sec:corrgraph}); optional constraint graph $\mathcal C$}
\KwOut{Partition $\mathcal P=\{\{i_1,j_1\},\dots,\{i_{n/2},j_{n/2}\}\}$}
\Begin{
  \tcp{(1) Build fully connected weights (Sec.~\ref{sec:corrgraph})}
  Compute all one-qubit marginals $\{\rho_i\}$ and all two-qubit marginals $\{\rho_{ij}\}_{i<j}$; assemble $W=[w_{ij}]$.\;

  \tcp{(2) Choose edge set}
  \eIf{$\mathcal C$ is not provided}{
    $E \leftarrow \{(i,j)\mid 1\le i<j\le n\}$ \tcp*{fully connected}
  }{
    $E \leftarrow \{(i,j)\in \mathcal C\}$ \tcp*{constrained/sparse edges}
  }

  \tcp{(3) Exact maximum-weight perfect matching when feasible}
  Attempt to compute a maximum-weight perfect matching $M^\star$ on graph $(V,E)$ with weights $w_{ij}$ (e.g., Edmonds/Blossom).\;
  \If{$M^\star$ is a perfect matching}{
    \Return{$\mathcal P \leftarrow \{\{i,j\}:(i,j)\in M^\star\}$} \tcp*{optimal for given $E$}
  }

  \tcp{(4) Heuristic fallback if no perfect matching exists on $E$}
  Initialize $M \leftarrow \emptyset$, $U \leftarrow V$.\;
  \While{$U\neq\emptyset$}{
     Pick $(i,j)\in E$ with $i,j\in U$ of maximum $w_{ij}$; set $M\leftarrow M\cup\{(i,j)\}$; $U\leftarrow U\setminus\{i,j\}$.\;
     \If{no such $(i,j)$ exists}{
        \tcp{augment $E$ minimally: allow top-$k$ missing edges incident to $U$}
        Add to $E$ the highest-weight missing edges touching $U$ (small $k$).\;
     }
  }
  \tcp{Local 2-opt refinement (pairwise exchanges)}
  \Repeat{no improvement}{
    For each two disjoint pairs $\{i,k\},\{j,\ell\}\in M$, test the four swaps
    $i\leftrightarrow j$, $i\leftrightarrow \ell$, $k\leftrightarrow j$, $k\leftrightarrow \ell$ and apply the one that maximizes $\sum_{(a,b)\in M} w_{ab}$.\;
  }
  \Return{$\mathcal P \leftarrow \{\{i,j\}:(i,j)\in M\}$}\;
}
\end{algorithm}
}

Greedy matching exploits that, for $n_i{=}2$, the partition reduces to a maximum-weight matching problem whose objective is equivalent to minimizing~\eqref{eq:cut}. The local refinement stage corrects greedy suboptimalities at negligible cost.

\subsection{Integration and complexity}

At iteration $j$:
\begin{enumerate}
  \item Estimate $W$ from $\ket{G_{j-1}}$ or from the current approximation (we use $\ket{G_{j-1}}$ for stability).
  \item Run Alg.~\ref{alg:corr-graph} (or its $n_i>2$ variant) to obtain $\mathcal P_j$ and, optionally, the logical permutation $\pi_j$ compatible with the hardware coupling map $\mathcal C$.
  \item Execute a Tucker step on $\mathcal P_j$ and apply the monotone gauge (Sec.~\ref{sec:convergence}) to form $\ket{G_j}$.
  \item If $F_j$ stalls, either (a) re-run Alg.~\ref{alg:corr-graph} with a larger $d_{\max}$ or perturbed weights, or (b) increase block size $n_i$ (stall-and-grow).
\end{enumerate}

With bounded-degree accumulation and fixed block size $n_i=2$, weight construction is $O(N\,|E|)$ 
(linear in the number of amplitudes for bounded-degree graphs); greedy matching on a sparse $G$ is $O(|E|\log n)$,
and local refinement is linear in tested pairs. Even for variable $n_i$, the search cost remains negligible relative to the constant-size SVDs dominating the Tucker step. The heuristic nature of the graph does not affect the monotonicity guarantees of Sec.~\ref{sec:convergence}; it primarily serves to \emph{accelerate} convergence by steering updates toward low inter-block correlation.

\paragraph{Remarks.}
(i) If prior structural knowledge (e.g., spatial locality, symmetries, or data layout) is available, it can be injected by masking or biasing edge weights.  
(ii) When hardware supports larger native gates, the same procedure extends to $n_i>2$ using $k$-set partitioning instead of matching; in current devices $k=2$ remains the most cost-effective.  
(iii) The graph may be recomputed at every iteration or updated incrementally; both choices preserve the monotonicity properties proven in Sec.~\ref{sec:convergence}.

\section{Numerical results}
\label{sec:results}

To evaluate the utility of the method, we apply it to the MNIST 784 dataset~\cite{lecun2010mnist}, compressing the circuit with a maximum factor size of two. Each digit is embedded into 10 qubits, and each iteration introduces five two-qubit general unitaries. These decompose into a circuit of approximate depth 13–14 with respect to the gate set $\{R_x, R_y, R_z, \mathrm{CX}\}$. Consequently, the circuit depth scales linearly with the number of iterations. For example, after six iterations the depth is approximately 78, while after 260 iterations it reaches about 3500, without applying any circuit optimization.

For the digit ``zero’’ (index 0 of~\cite{lecun2010mnist}), a clear fidelity-depth trade-off is observed: increasing the number of iterations reduces the fidelity loss, but simultaneously increases the circuit depth. This relationship is illustrated in Fig.~\ref{fig:progression-zero}. The standard Qiskit~\cite{Qiskit} initializer, which employs the isometries method~\cite{Iten2016}, is used as a baseline. Any circuit depth exceeding this baseline is marked as \textit{No Use}, as it provides no practical advantage. Approximations below this threshold -- even those associated with relatively high fidelity loss -- may still offer sufficient quality for certain applications.
\begin{figure}[ht]
    \centering
    \includegraphics[width=0.9\linewidth]{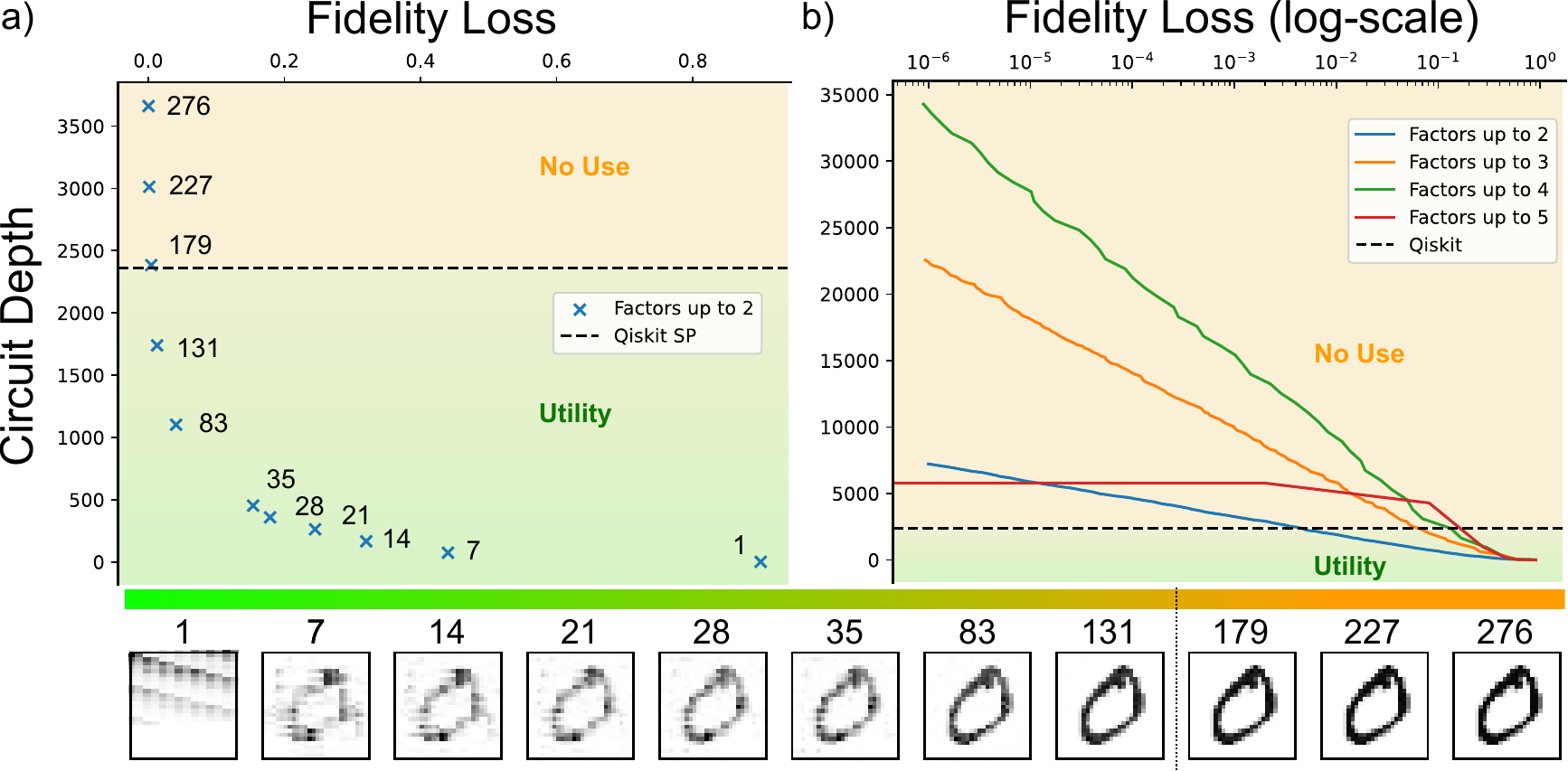}
    \caption{This study examines the relationship between iteration count, circuit depth, and fidelity loss in the Q-Tucker method. As illustrated by the scatter plot, the method achieves rapid convergence -- requiring fewer iterations -- when an acceptable fidelity loss is targeted, highlighting its potential utility.}
    \label{fig:progression-zero}
\end{figure}

To examine the effect of the factor size, values of 2, 3, 4, and 5 were compared on the same digit (cf. Fig.~\ref{fig:progression-zero}). Since the synthesis of general three-, four-, and five-qubit unitaries is expensive, the method does not scale efficiently in this regime, highlighting the need for improved multi-qubit unitary synthesis techniques. Nevertheless, larger factor sizes exhibit a clear advantage in terms of convergence speed (not shown in the figure). For the digit ``zero’’, convergence below a precision of $10^{-6}$ is achieved after 553 iterations with factor size 2, 163 with factor size 3, 51 with factor size 4, and only 8 with factor size 5. These results indicate that advances in unitary synthesis could yield substantial improvements for the iterative Tucker decomposition.

\section{Comments and Conclusion}
\label{sec:conclusion}

Empirical tests on real-world data show that even when Q-Tucker fails to fully converge -- an infrequent scenario -- it still produces high-quality approximations, often matching the best achievable results within a given classical computational budget. In our implementation, the number of iterations is capped at the square of the number of qubits. This parameter, which directly influences circuit depth, can be tuned to match the constraints of the target device and application. Within this limit, the method consistently delivers efficient decompositions compatible with available classical resources.

A key strength of the approach lies in its reliance on SVDs over two-qubit subsystems. Since these subsystems have fixed dimensionality, each SVD incurs a constant cost. This avoids the scalability bottleneck typical of tensor decompositions like Tucker, where SVD costs scale cubically with input size. Additionally, the heuristic used to search for a good configuration tuple has linear complexity with respect to the number of state amplitudes. As a result, the overall asymptotic cost of the method is linear in the number of amplitudes -- a significant improvement over prior approaches~\cite{araujo2024lowrank}.

The method also includes a convergence detection mechanism. When convergence is deemed unlikely, it falls back on the Schmidt method~\cite{araujo2024lowrank}, incorporating partial disentanglement. This concept of partial disentanglement enables compressions that standard methods often overlook. However, the fallback remains comparatively slow, even when combined with greedy heuristics, highlighting the need for faster alternatives.

In summary, Q-Tucker achieves the best possible approximation allowed by the classical resources at hand, offering two key guarantees: (1) The approximation improves proportionally with increased classical resources. (2) The algorithm always returns a result -- even for large, complex states where only a coarse approximation may be feasible.

\begin{credits}
\subsubsection{\ackname} We acknowledge funding of the Federal Ministry of Research, Technology and Space Project Number 13N17157 (Q-ROM).

\subsubsection{\discintname}
Carsten Blank is the co-founder and CEO of data cybernetics. Furthermore, a patent application covering the methods described herein has been filed and is pending; details are under confidentiality restrictions.

\end{credits}
%
%
%
\bibliographystyle{splncs04}
\bibliography{references}
\end{document}